\newcommand{\subparagraph}{}
\newcommand{\squishlist}{
 \begin{list}{$\bullet$}
  { \setlength{\itemsep}{0pt}
     \setlength{\parsep}{3pt}
     \setlength{\topsep}{3pt}
     \setlength{\partopsep}{0pt}
     \setlength{\leftmargin}{1.5em}
     \setlength{\labelwidth}{1em}
     \setlength{\labelsep}{0.5em} } }
\newcommand{\squishend}{
  \end{list}  }
\newcommand{\specialcell}[2][c]{%
  \begin{tabular}[#1]{@{}c@{}}#2\end{tabular}}
\def\test{\textrm{test}}
\def\train{\textrm{train}}
\def\Vtest{V_\textrm{test}}
\def\Vtrain{V_\textrm{train}}
\def\ego{\mathbf{Ego}}
\newcommand{\Exp}[1]{
E\left[{#1}\right]
}
\newtheorem{lemma}{Lemma}
\newtheorem{theorem}{Theorem}
\begin{document}

\numberofauthors{4}
\author{
\alignauthor
Rahmtin Rotabi\\
       \affaddr{Cornell University}\\
       \email{rahmtin@cs.cornell.edu}
\alignauthor
Krishna Kamath\\
       \affaddr{Twitter Inc.}\\
       \email{kkamath@twitter.com}
\alignauthor
Jon Kleinberg\\
       \affaddr{Cornell University}\\
       \email{kleinber@cs.cornell.edu}
\and  
\alignauthor
Aneesh Sharma\\
       \affaddr{Twitter Inc.}\\
       \email{aneesh@twitter.com}
}

\title{Detecting Strong Ties Using Network Motifs}

\maketitle

\begin{abstract}
     Detecting strong ties among users in social and information networks
  is a fundamental operation that can improve performance on a
  multitude of personalization and ranking tasks. There are a variety
  of ways a tie can be deemed ``strong'', and in this work we use a
  data-driven (or supervised) approach by assuming that we are
  provided a sample set of edges labeled as strong ties in the
  network. Such labeled edges are often readily obtained from the
  social network as users often participate in multiple overlapping
  networks via features such as following and messaging. These
  networks may vary greatly in size, density and the information they
  carry --- for instance, a heavily-used dense network (such as the
  network of followers) commonly overlaps with a secondary sparser
  network composed of strong ties (such as a network of email or phone
  contacts). This setting leads to a natural strong tie detection
  task: given a small set of labeled strong tie edges, how well can
  one detect unlabeled strong ties in the remainder of the network?

  This task becomes particularly daunting for the Twitter network due
  to scant availability of pairwise relationship attribute data, and
  sparsity of strong tie networks such as phone contacts. Given these
  challenges, a natural approach is to instead use structural network
  features for the task, produced by {\em combining} the strong and
  ``weak'' edges. In this work, we demonstrate via experiments on
  Twitter data that using only such structural network features is
  sufficient for detecting strong ties with high precision. These
  structural network features are obtained from the presence and
  frequency of small network motifs on combined strong and weak
  ties. We observe that using motifs larger than triads alleviate
  sparsity problems that arise for smaller motifs, both due to
  increased combinatorial possibilities as well as benefiting strongly
  from searching beyond the ego network. Empirically, we observe that
  not all motifs are equally useful, and need to be carefully
  constructed from the combined edges in order to be effective for
  strong tie detection. Finally, we reinforce our experimental
  findings with providing theoretical justification that suggests why
  incorporating these larger sized motifs as features could lead to
  increased performance in planted graph models.
\end{abstract}

%

\keywords{strong-tie detection, graph information transfer, network motifs} 

\section{Introduction}
Many large social media platforms are constructed so that
users participate in multiple networks simultaneously.
On Twitter, for example, a user can establish links to other users
by following them, or sending them a direct message, or by including them in
a phone book or e-mail address book.
Each of these modes of interaction defines a different network on the
set of users --- while these networks are clearly related, they
represent different types of connections; you might easily follow
someone whom you have no expectation of ever messaging or contacting
by phone.  Moreover, the networks can differ greatly in their
usage and sparsity; Twitter users will generally follow multiple other users,
but many may have never populated their phone books.

In these social media contexts, we thus encounter a wide range of
cases in which a network $G_L$ with a large number of links coexists
with an overlapping, but much sparser network $G_S$. As in the case of
the (dense) follower graph and (sparser) phone book graph on Twitter,
the sparsity of $G_S$ often comes about for two reasons.  First, in
contrast to the follower graph, the phone book is not the main feature
of the site.  Second, even for users who have created links in the
sparser graphs $G_S$ (such as the phone-book graph), these links tend
to correspond to the user's {\em strong ties}, and so are less
numerous.  Despite their sparsity, networks like the 
phone-book graph contain information that is immensely valuable, in part
because of their focus on strong ties; they can be used for improving
relevance and personalization in user/content recommendations, 
creating more personalized notifications, and other
applications. Many of these tasks can benefit from estimates of 
edges likely to belong to $G_S$, even if they haven't been
explicitly reported by users.

Thus, estimating which edges belong to these types of sparse graphs $G_S$ 
can be viewed as a 
type of strong-tie detection problem \cite{gilbert_2012}.
But as we discuss next, our setting adds new aspects to the problem;
existing techniques for strong-tie detection produce weak performance
in our context, and in contrast we develop a set of new methods that
yield significantly more powerful results.

\paragraph{\bf Data-driven strong tie detection}
Our particular setting of the Twitter network adds some new dimensions
to strong tie detection that we believe haven't been studied before
in concert:
(i) a data-driven formulation of the problem;
(ii) extreme sparsity of demographic features; and
(iii) possible ineffectiveness of interaction features.
The first point here is that while the sparse graphs $G_S$ that
we consider have significant overlap with the strong-tie structure,
they are not precisely the set of strong ties, and so we need to
take a data-driven approach: rather than starting from sociological
principles about strong-tie structure, we use the 
existing structure of
edges around nodes that participate in $G_S$ to learn the features
that are most predictive,
The second and third points, about the limitations of demographic
and interaction features for our problem, make clear why new
techniques are needed.  Recall that these kinds of features
(reciprocity of interaction, engagement volume, and related measures) 
have been proven
to be the most effective at strong tie detection in prior
work~\cite{gilbert_2012}. But in our setting, the demographic features
are quite sparse as users do not need to report demographic
information to use the website. Furthermore, due to the
data-driven approach, interaction features may not be most indicative
of the particular strong tie label we're trying to predict.

As a concrete way to see the challenge, we began by running standard
strong-tie detection algorithms using as many features
from existing methods~\cite{gilbert_2012} 
as we could (all the high-weight features were
available), trying to predict whether an edge in the Twitter follower graph
was also in the phone-book graph.  We found that on a
fully balanced dataset these features only provide a prediction
accuracy of $56\%$ which barely beats the random baseline by $6\%$.
As we will show in this work, we are able to do much better with the
techniques we develop here; our eventual performance will be $87\%$
accuracy for the same task. Given the poor performance of prior
baselines, it becomes important to understand the challenges in this
particular version of the problem more clearly.

In the motivating applications for the problem, the graph $G_S$
providing the strong tie labels is sparse because many (or most) users
have not yet started using the feature defining $G_S$, leading to a
graph with a large fraction of disconnected or isolated nodes. And
even the nodes that are not isolated have very low degrees.  As a
result, standard link-prediction methods cannot be applied on $G_S$
alone to ``bootstrap'' the internal structure of $G_S$ by itself.

Our setting is closer to the ``cold-start'' problem in recommendation
systems
\cite{bohb-cold-start,spup-cold-start,ssbxc-cold-start},
where the challenge is to make recommendations to users for
whom the system has no history at all, and the 
general idea is to use some source of side information to
provide recommendations to such users. This is where the weak ties
provided by the denser network $G_L$ come into play; even if a
user $u$ has no edges in $G_S$, the weak ties that $u$ has in $G_L$
provide information about the presence of $u$'s edges in $G_S$. We
note that in contrast to prior work on the cold-start problem, our
side information in this case is a network in its own right. In this
respect, our strong tie detection problem is a question of information
transfer. In contrast to existing work on information transfer,
however, we are considering settings where we may well not have any a
priori principles informing the typical patterns of links in our
domain. Thus, we aim to induce patterns purely from training data,
adapting to different settings independently of whether any particular
hypothesized structural patterns turn out to be the most effective.

The core of our problem, then, is to combine information from the weak
ties in $G_L$ with the strong ties from the subset of nodes that
participate in $G_S$, to predict strong ties
for nodes that currently do not participate in $G_S$.
We now discuss some of our
techniques and results for this problem.

\paragraph*{\bf Filling in a sparsely-populated network}
Addressing this question requires that we use information latent in
both $G_S$ and $G_L$.  Intuitively, from the portions of $G_S$ that we
are able to observe, we try to infer the typical patterns formed by
the edges in $G_S$ and their interplay with edges in $G_L$.  We then
go to nodes where the edges of $G_S$ are absent, and we look for
evidence of similar patterns among the edges of $G_L$.  This provides
evidence for where the hidden edges of $G_S$ might be.  For this
strategy to work, it is important that $G_L$ provide us with
sufficient information about $G_S$; a key aspect of the approach is
based on transferring information between the two graphs $G_L$ and
$G_S$.  In particular, for the ``patterns'' formed among the edges of
$G_S$, we use the presence of small subgraphs or {\em motifs} that are
highly represented in the observable portion of $G_S$.  Given a user
$a$ who is incident to no edges of $G_S$ --- a user with no strong
ties --- we try to infer which of $a$'s incident edges (or weak ties)
in $G_L$ are most likely to belong to $G_S$ as well.  To do this, we
assign each of $a$'s incident edges in $G_L$ multiple scores based on
their participation in certain small subgraphs.  Note that the edges
of these subgraphs may have mixed membership in $G_L$ and $G_S$; in
particular, if $(a,b)$ is an edge of $G_L$, then $b$ may have incident
edges in $G_S$ even though $a$ does not.

Using the subgraph-induced scores for an edge $(a,b)$ as a vector of
features, we can then learn a classifier for labeling edges of $G_L$
as strong ties based on the nodes in the graph that have incident
edges from $G_S$.  In this way, we are not presupposing which
particular motifs are indicative of membership in $G_S$, but instead
identifying the most important motifs from the data.

In evaluations on the Twitter user network, we obtain strong
performance in inferring the presence of unobserved labels in several
of the site's main underlying networks, using {\em only} network
features. In particular, adopting an evaluation framework in which we
hide the presence of a subset of the direct messages, phone book, and
email address book edges --- thus providing ground truth for our
evaluation --- we find that our approach using the mutual following
graph as $G_L$ yields high accuracy in detecting strong ties.

Moreover, among the graph motifs that are most important for the task
of classifying edges of $G_S$, we find that motifs on more than three
nodes play a crucial role.  This forms an intriguing contrast to the
work on strong-tie detection within networks of weak ties --- one of
the most widely-studied cases of information transfer between
different types of networks --- since in that context the key issue
has traditionally been the participation of an edge $(a,b)$ in
triangles, which correspond in our framework to features comprised
entirely of a set of three-node subgraphs.  For the networks we infer
here, the participation of edges $(a,b)$ in larger motifs turns
out to be vital as well.

\paragraph*{\bf The role of larger graph motifs}
A possible reason for the role of larger graph motifs in our task is
that these larger structures provide a partial antidote to the problem
of sparsity --- due to the number of combinatorial possibilities, they
have the potential to be more abundant than triangles in the training
data we have on $G_S$, particularly when the observed part of $G_S$ is
highly sparse.  To understand the trade-off between sparsity and the
use of larger structures as features, we propose and analyze a set of
generative network models where this effect appears clearly, and with
provable guarantees.  Specifically, we consider a mathematical model
in which nodes belong to planted communities, and edges of $G_S$ lie
within these communities.  As in prior work on random graphs with
planted community structure, we cannot directly observe community
membership (a proxy for strong ties), but the structure of the graph
conveys latent information about it.  Our question, however, is
related to but different from the standard problem of inferring
community membership; rather, we want to classify edges by their
membership in $G_S$, where the random generation of $G_S$ is based on
the community structure.

We discover that the qualitative findings from our evaluation on the
Twitter dataset hold for this generative model as well, and with
provable guarantees arising from analysis of the model; motifs larger
than triangles provably help in identifying edges of $G_S$, and the
gain from these subgraphs increases as the observable portion of $G_S$
becomes sparser.  Moreover, our approach based on the frequency of
motifs is robust enough that it even yields guarantees when nodes
belong to multiple overlapping communities with independent
membership.

Overall, the interaction of the computational evaluation on Twitter
and the analysis of the generative models suggests that our concrete
task, transferring information from one graph to discover strong ties
around isolated nodes in another, is a useful and general problem that
provides insight into the role of local network motifs.

In the remainder of the paper, we first present an experimental
evaluation of our methods on the Twitter dataset, followed
by modeling and simulations that seek to provide a
theoretical basis for understanding our experimental observations.

\section{Related Work}
Our approach is related to several lines of research concerned with
network structure, particularly in the domain of social and
information networks. More specifically, our work can be seen as
positioned at the intersection of three topics in network analysis:
(a) strong tie detection (albeit under extreme sparsity of features),
(b) link prediction in a case where many nodes are isolated (because
they do not yet participate in $G_S$),
and (c) graph information transfer in a purely data-driven manner,
i.e. without pre-supposing any sociological basis.

As we discussed in the introduction, previous work on strong tie
detection~\cite{gilbert_2009,gilbert_2012,marsden-tie-strength} yields
only limited effectiveness in
our setting both due to feature sparsity (even of small
network motifs such as triads) and due to our data-driven labeling goal;
we seek to transfer information from a denser graph on the
same set of nodes. Existing methods for strong-tie detection have used
structural information based on triangles, following ideas from
sociology
\cite{granovetter-weak-ties,marsden-tie-strength,eagle-inferring-friendship,jones-tie-strength,sintos-tsaparas-strong-ties},
whereas we make use of motifs larger than triangles.
Other studies that
have looked at information transfer include analysis of
advisor-advisee relationships \cite{wang-advisor-advisee}, 
romantic relationships
\cite{backstrom-cscw14}, and other types of relationship transfer
\cite{liu-heterogeneous,sun-heterogeneous,tang-heterogeneous}.
Our goal, however, is to transfer information without a priori sociological
principles to guide the process.


A number of fields have developed frameworks for analyzing small
subgraphs that occur frequently in larger networks; such formalisms
have been termed {\em network motifs} \cite{milo-motifs}, the {\em
  triad census} \cite{faust-triad-2010}, and {\em frequent subgraph
  mining} \cite{kuramochi-freq-subgraph,yan-han-freq-subgraph}, and
have been used in extremal graph theory \cite{borgs-graph-hom}
and social media analysis \cite{ugander-www13}.
Our work takes a different perspective as its starting point; rather
than using the frequency of a given subgraph as the key criterion, we
are proceeding in a more supervised fashion, using data to infer which
subgraphs are most informative for our task.

Another relevant line of work is the network completion
problem \cite{kim2011modeling, kim2011network}. Results in this area
propose generative models recreating an entire graph to infer 
missing parts of a given partial graph.  
These methods have also been
developed primarily without the goal of 
handling a large fraction of isolated nodes; and
they only scale
up to hundreds of thousand of nodes, which is much smaller than our
graphs of interest.

The transfer of graph information bears a
distant relationship to transductive learning
\cite{joachims-transductive} and label propagation
\cite{zhu-label-prop}, although these methods
typically propagate node
attributes from labeled to unlabeled nodes, while we use
the network structure to infer information about edges.
The structure of
our generative models draws motivation
from {\em stochastic block models}
\cite{condon2001algorithms,mcsherry-planted,krzakala2013spectral,mossel2014belief}; for us, such models provide
a setting in which phenomena
we observe in our computational evaluation appear
with provable guarantees, providing a certain level of qualitative
insight into how they operate.

%

\section{Methods and Experiments}

We now describe our methods for predicting strong ties
by combining information from two different graphs $G_L$
and $G_S$, and we discuss the
results of experiments on the Twitter graph.

\subsection{Prediction Task}

We begin with the formal set-up for our prediction task.  Recall that
we are given two graphs: a denser graph $G_L=(V,E_L)$ that contains
all the edges available to us for analysis;
and an overlapping, sparsely populated graph
$G_S=(V,E_S)$, which contains the reported strong ties.
We will think of the edges in $E_S$ as being {\em labeled}
with their strong-tie status.
Recall that one of our primary motivations is to detect strong-tie
labels for users who have not reported any strong ties. However, to
set up a formal prediction task, we need a ground-truth labeled set
that we can compare our predictions to. Thus, we propose the following
evaluation framework to evaluate methods for predicting strong ties.
We pick a small set of test nodes $\Vtest \subset V$ and remove all of
their edges in $E_S$ while retaining their edges in $E_L$. Doing this
process will simulate the setting where we do not have any strong ties
reported around the test nodes.  The prediction task for a supervised
machine learning algorithm is to build a model on the remaining graph
that is able to predict for each $v\in \Vtest$, which of $v$'s edges
in $E_L$ was a removed edge from $E_S$.

Let $d_L(v)$ denote the degree of node $v$ in $G_L$, and let
$d_S(v)$ denote the degree of $v$ in $G_S$.
Our test set of nodes $\Vtest$ is a random subset of the nodes in $V$
subject to the condition that $d_S(v) > 0$ for all $v \in \Vtest$. We define
$\Vtrain = V - \Vtest$. We call the induced graph of $G_i$
(again, $i\in \{L,S\}$) on $\Vtrain$ and $\Vtest$ to be $G_{i,\train}$
and $G_{i,\test}$, respectively.

To reflect a scenario in which nodes in $\Vtest$ are users who have
not adopted the feature that grants us $E_s$, we allow the algorithm
to only utilize $G_{L,\train}$ and $G_{S,\train}$ for training a
model. Once the algorithm trains its model, it is given access to
$G_{L,\test}$ and the algorithm's prediction task is to output exactly
one edge for each node $v \in \Vtest$, namely the edge that is the
algorithm's best guess for belonging in $G_{S,\test}$. The precision
of the algorithm for a given set of test nodes is the fraction of
correct guesses; in traditional information retrieval parlance it is
known as precision in the first position, or $p@1$. Thus, we are
considering a setting in which a user joins the platform and links to
a few people in the heavily-used network $G_L$, but does not adopt the
sparsely-used network $G_S$; based on this, we wish to report a link that is
likely to be a strong tie for this user.

\subsection{Twitter Dataset}

We now describe the graphs that we analyze, all of which are produced
by interactions between users on Twitter.\footnote{All the
Twitter data has been analyzed in an anonymous, aggregated form to
preserve private information.}
We study four undirected graphs built from different
types of interactions between users on Twitter. The nodes in
these graphs are the Twitter users (note that a user may not correspond to
an individual, since other account types are possible on Twitter), 
and the interpretation of the edge depends on
the graph being considered. The graphs are as follows:
\squishlist
\item \textbf{Mutual Follow:} this is the graph of users who
  follow each other on Twitter. Since we focus on symmetric
  relationships on Twitter, this results in an undirected graph.
\item \textbf{Phone Book:} Twitter users often import their phone
  books for finding their friends on Twitter. That feature gives rise
  to a graph in which we have an edge between two users if and only if
  both users have imported their phone book and have each other's
  phone number.
\item \textbf{Email Address Book:} users can also import their email
  address book on Twitter, and analogous to the phone book, we can
  construct an email graph where an edge exists if both users have
  each other in their email address book.
\item \textbf{Direct Message:} Users on Twitter can also send private
  messages to each other, and these are called {\em direct
    messages}. Thus, we can consider an undirected graph in which
  there is an edge between each pair of users who
  have sent at least one direct message to each other.
\squishend

For our computational experiments, 
we collected a single complete snapshot of all
these graphs on July 30, 2015. The mutual follow graph contains 
hundreds of millions of users, and tens of billions of edges. The portion
of other graphs that we consider are approximately a tenth the size of
the mutual follow graph.\footnote{We are unable to share the exact
  size of these graphs but we note the relative sparsity of the strong
  tie graphs.} As the Mutual Follow graph is much more densely
populated than the others, we create three instances of our prediction
task: in each, the Mutual Follow graph forms $G_L$, and one of the
other graphs (Phone Book, Email Address Book, or Direct Message) forms
$G_S$.  We only preserve edges in $G_S$ that also belong to $G_L$;
this removes less than $0.5\%$ of edges, so it is a simplification
with negligible impact for our purposes.

We construct $\Vtest$ by randomly choosing $5\%$ of the
nodes that satisfy the following pair of conditions:
(i) they have an edge in $G_S$, and (ii) their degree in $G_L$ is between
$10$ and $75$. This degree restriction on $G_L$ reflects the motivating
scenario in which the users in $\Vtest$ are relatively lighter users (or new users)
of the platform, for whom we are trying to infer edges in networks
they have not yet populated (while still having enough edges to predict). The smallest $\Vtest$ over the three different graphs we look at 
includes roughly $29000$ nodes.

\subsection{Prediction Algorithms}

Our algorithms for predicting strong ties will operate as follows.
We start with a node $a \in \Vtest$ that has no incident edges in
$G_S$, and we would like to identify an edge in $G_L$, incident to $a$,
that in fact	 belongs to $G_S$.
Let $B$ be the set of neighbors of $a$ in $G_L$.
Because of the simplifying assumption, motivated above, that
$E_S \subseteq E_L$, the candidate edges incident to $a$ that we are
choosing among all consist of edges from $a$ to a node in $B$.

We develop a number of {\em score} functions, each of which assigns a
number to every node in $B$.  We can use each individual score as a
predictor in itself, by sorting the nodes in $B$ according to the
score, selecting the highest-scoring $b \in B$, and declaring $(a,b)$
to be the predicted edge in $E_S$.\footnote{We break ties based on
  degree (smaller is prioritized) and the time the user joined the
  platform (earlier is prioritized).}
 A few of these scoring functions
are used in the literatures on strong tie detection
and on link prediction, and we use
them as baselines. Recall from the introduction that interaction
features under-perform, so these are in fact stronger baselines for
the prediction task.  We also use the scores as features and combine
them using a machine-learning algorithm trained on the nodes in
$\Vtrain$ for which we have ground-truth edges belonging to $E_S$.  By
comparing the performance of different scores, as well as combinations
of them, we can thus determine which structures are most effective at
transferring information from $G_L$ to the unpopulated parts of
$G_S$. The fact that our scores include standard baselines from link
prediction and strong-tie prediction enables us to quantify
the gains from scores corresponding to more complex structural
formulations.

In order to define our scoring functions, we introduce the following notation.
Given a graph $G$ and a node $v$ in the graph, we use $N(G, v)$ to denote the
neighbors of $v$ in $G$ and the operator $\ego(G, v)$
to refer to the {\em ego network} (the induced subgraph on
$N(G,v) \cup \{v\}$) of the node $v$ in $G$.
Recall that we use $a$ for the node in $\Vtest$ for whom we
are predicting an edge in $E_S$, and $B = N(G,a)$.
We also use $C$ to denote the set of nodes at distance exactly two
from $a$ in $G_L$. The scoring functions we use are given in Table~\ref{Algorithms}.
We divide the scoring functions into two groups, Group 1 and Group 2.
In both groups, we also include a composite score that uses linear
regression to combine all the scores (and their logarithms) in the
group to form a single machine-learning classifier.  Scoring functions
in {\em Group 1} are standard benchmarks from link prediction and
strong-tie prediction, and they provide a baseline for comparison.
Scoring functions in {\em Group 2} use larger graph-theoretic
structures; where the Group 1 scores are based primarily on triangles
and counts of mutual neighbors, the scores in Group 2 use counts of
4-node structures (squares) and 5-node structures (pentagons), with
the formal specifications provided in Table~\ref{Algorithms}. We also
counted network motifs such as the complete graph of size $4$ and $5$
but they were too sparse and did not help the prediction.

\addtocounter{footnote}{1}
\begin{table*}[!htbp]
\caption{The definition of all the scoring algorithms evaluated experimentally.}
\small
\begin{tabular}{|c |c |c |c|}
  \hline
  Name & Definition & Optimization  & Group \\ \hline

  Random neighbor & \specialcell{A simplistic baseline that picks a
                    node in $B$ uniformly at random}& N/A & 1\\ \hline

  Lowest degree neighbor\footnotemark & \specialcell{Picks the node $b\in B$ with the
                           minimum degree $d_L(b)$ }& minimization & 1	\\ \hline

  Embeddedness & \specialcell{Picks the node $b \in B$ that has
                 maximum number of mutual neighbors with $a$}&  maximization & 1 \\ \hline

  Adamic-Adar &\specialcell{The score assigned to $b \in B$ is $\sum_{v \in \{B \cap N(G_L,b)\}} \frac{1}{log(d_L(v))}$
  		which is a weighted version of \\ Embeddedness giving a greater  weight to mutual neighbors with a lower degree} & maximization & 1\\ \hline

  H1 & \specialcell{This heuristic function assigns $d_L(b)$ as the
       score for $b\in B$ if $d_S(b) > 0$ and 0 otherwise} & minimization &  1\\ \hline

  Triangle & \specialcell{This method finds $N(G_S,b) \cap B$ for every $b \in B$}& maximization & 1 \\ \hline
 Basic ML model & \specialcell{This method builds an LR model using features from rows above}& maximization & 1 \\ \hline

  Square inside & \specialcell{For every $b \in B$, this algorithm
                  reports the number of cycles  with 4 nodes (or {\em
                  squares}) \\containing the edge $(a,b)$ with the
  other two nodes in the cycle also being in $B$.\footnotemark} & maximization &  2\\ \hline
 \addtocounter{footnote}{-1}

  Square outside & \specialcell{ For every $b \in B$, this function
                   reports the number of cycles with 4 nodes (or {\em
  squares}) \\containing the edge $(a,b)$,  exactly
  one other node in $B$ and one node in $C$.\footnotemark}& maximization & 2\\ \hline
 \addtocounter{footnote}{-1}

  Pentagon inside & \specialcell{ This method counts the number of
                    cycles of length $5$ for  each  $b \in B$ containing edge \\$(a,b)$ where all 5 nodes are inside $\ego(G_L,a)$.\footnotemark} & maximization & 2 \\ \hline
 \addtocounter{footnote}{-1}

  Pentagon outside & \specialcell{This method counts the number of
                     cycles of length $5$ for each  $b \in B$ containing edge \\$(a,b)$ where 3 nodes are inside $\ego(G_L,a)$ and the other nodes are in $C$.\footnotemark}& maximization & 2  \\ \hline
\addtocounter{footnote}{-1}

 Enhanced ML model & \specialcell{This method builds an LR model using all
             scores listed above as features}& maximization & 2 \\ \hline
\end{tabular}
\label{Algorithms}
\end{table*}
\addtocounter{footnote}{-1}

Before presenting our results, we illustrate the scoring functions
with a toy example. In Figure~\ref{fig: ToyExample}, the dashed edges
represent weak ties and the solid edges represent strong ties. Note
that edges labeled as strong ties also count as weak ties.
\begin{figure}[htp]
\centering
\includegraphics[width=0.9\columnwidth]{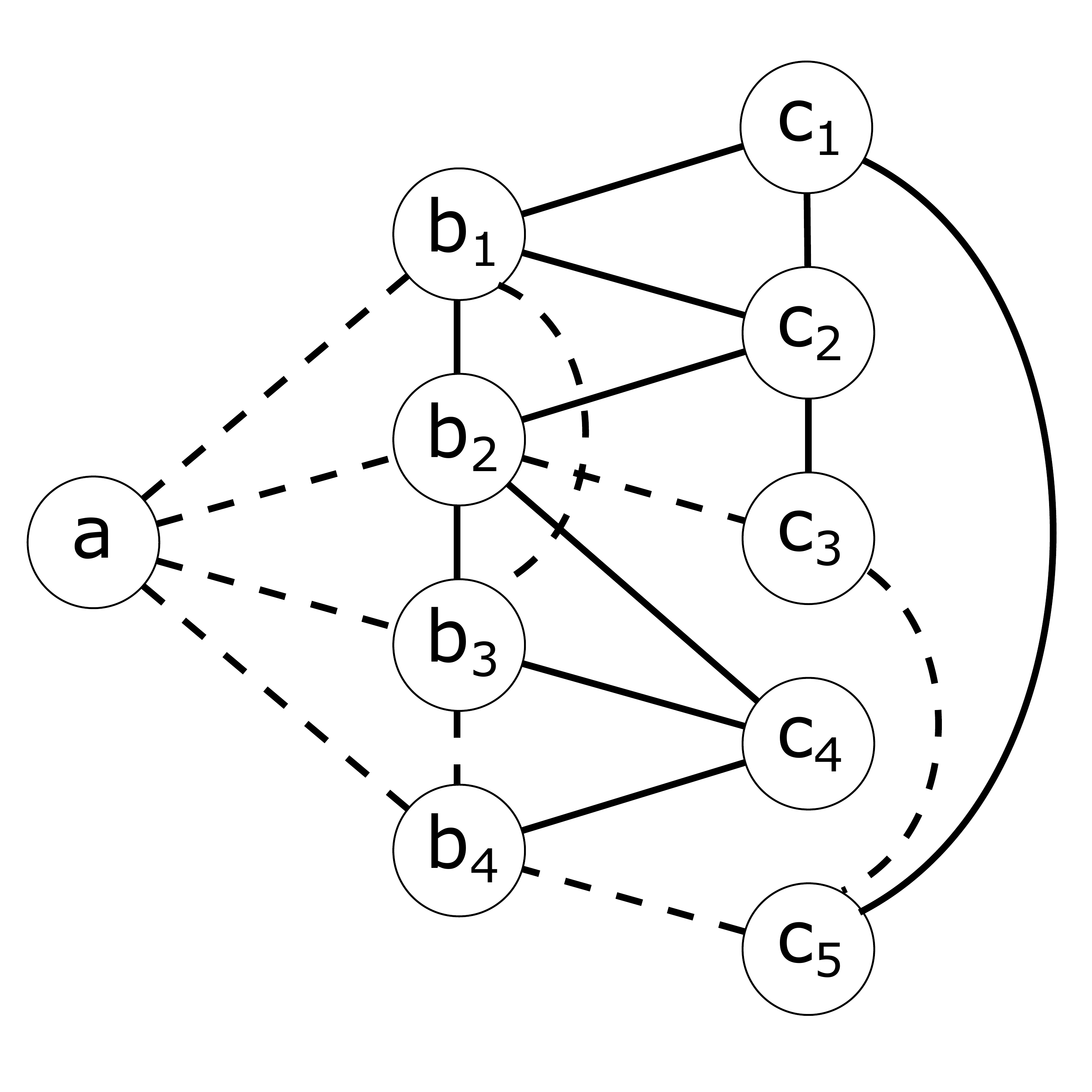}
\caption{A simple graph centered around node $a$.}
\label{fig: ToyExample}
\end{figure}
Consistent with the notation above we are looking at node $a$ and all the strong ties of $a$ are hidden.
The scores for node $b_1$ on this small graph are:
Degree=$5$, Embeddedness=$2$, Adamic-Adar= $\frac{1}{\log(5)} + \frac{1}{\log{(6)}}$,
H1=$5$, Triangles=$1$, Square Inside= $1$ (Square through $b_2$ and $b_3$),
Square Outside=$1$ (Square through $c_2$ and $b_2$), Pentagon Inside=$0$
and finally Pentagon Outside=$1$ (Pentagon through $c_1$, $c_2$ and $b_2$).

The focus of this work is on finding useful scoring functions for
prediction, but we also want to briefly mention computational
considerations for these. Computing small sub-structures such as
triangles, squares and pentagons on large graphs is well-known to be
a challenging problem; the triangles case in particular has a large
literature and many provably efficient approximations
\cite{seshadhri-pinar-kolda-triadic}. We briefly note that for
practical purposes, heuristics based on neighborhood sketches such as
HyperLogLog \cite{flajolet_2008} can be used to efficiently compute
these features in a MapReduce~\cite{dean2008mapreduce} computation.
We provide a brief description of this heuristic in section
\ref{sec:hyperloglog}. We emphasize however that in our experiments,
we do not use any approximations since our training/test data size was
small enough to be tractable with exact computation.


\begin{figure*}[!ht]
        \begin{minipage}[l]{0.57\columnwidth}
            \centering
            \includegraphics[width=\linewidth]{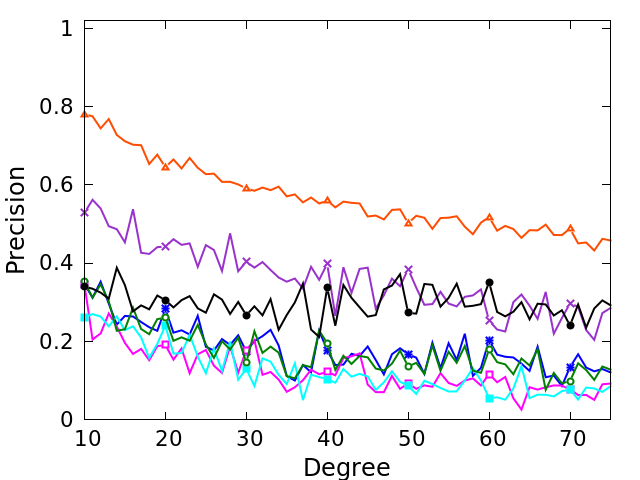}
           \captionsetup{labelformat=empty}
            \caption{(a) Direct Message}
        \end{minipage}
        \begin{minipage}[r]{0.57\columnwidth}
            \centering
            \includegraphics[width=\linewidth]{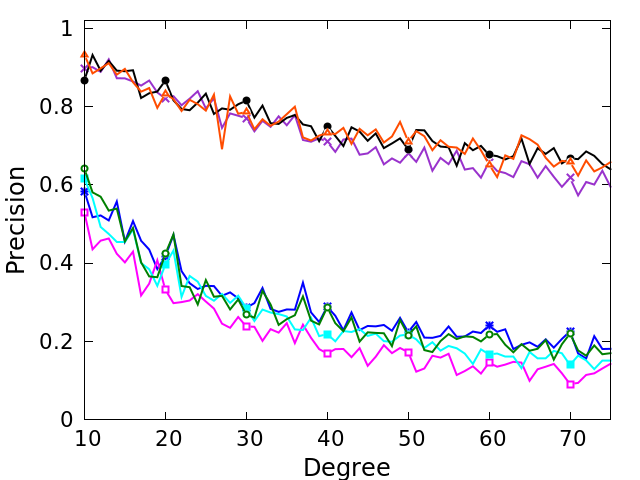}
            \captionsetup{labelformat=empty}
            \caption{(b) Phonebook}
        \end{minipage}
           \begin{minipage}[r]{0.82\columnwidth}
            \centering
            \includegraphics[width=\linewidth]{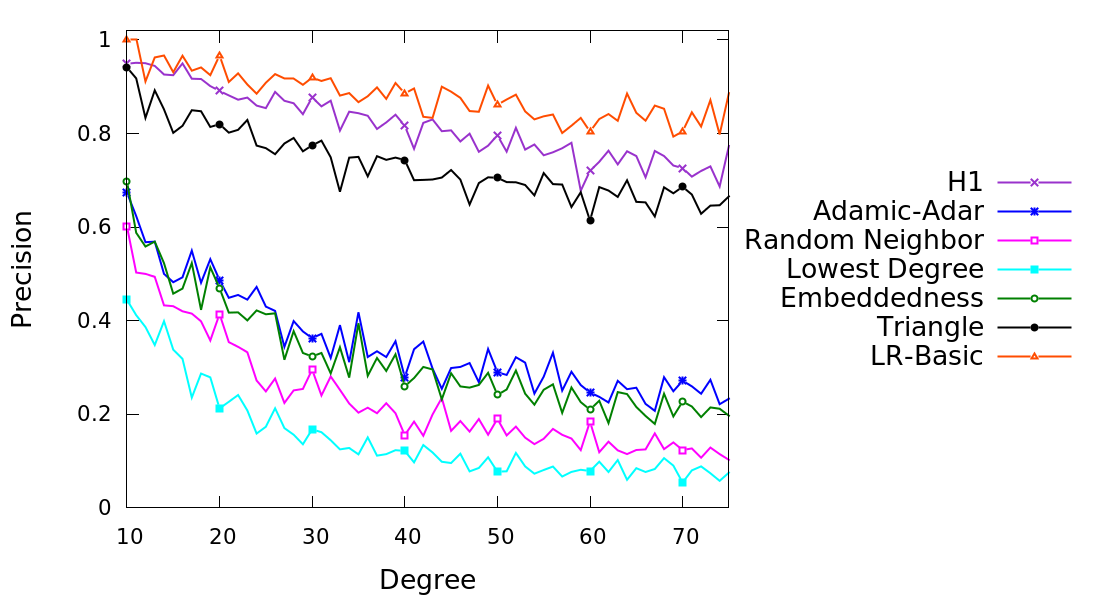}
            \captionsetup{labelformat=empty}
            \caption{\hspace{-1.1cm}(c) Email Address Book}
        \end{minipage}
        \addtocounter{figure}{-1}
        \addtocounter{figure}{-1}
        \addtocounter{figure}{-1}
  \caption{Precision at 1 for group 1 scoring functions}
 \label{Ego}
\end{figure*}

\begin{figure*}[!ht]
        \begin{minipage}[l]{0.57\columnwidth}
            \centering
            \includegraphics[width=\linewidth]{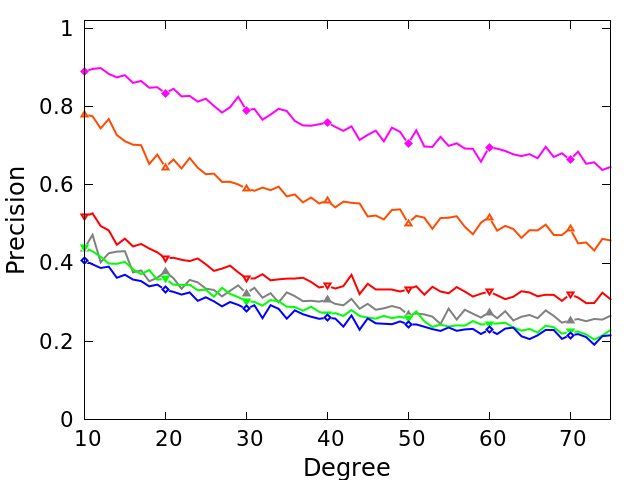}
           \captionsetup{labelformat=empty}
            \caption{(a) Direct Message}
        \end{minipage}
        \begin{minipage}[r]{0.57\columnwidth}
            \centering
            \includegraphics[width=\linewidth]{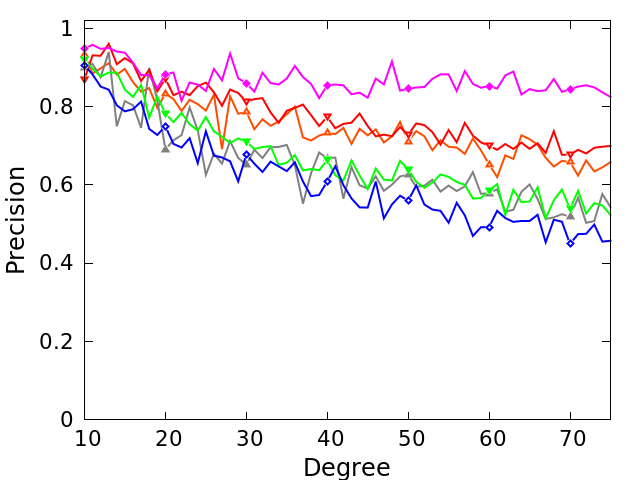}
            \captionsetup{labelformat=empty}
            \caption{(b) Phonebook}
        \end{minipage}
           \begin{minipage}[r]{0.82\columnwidth}
            \centering
            \includegraphics[width=\linewidth]{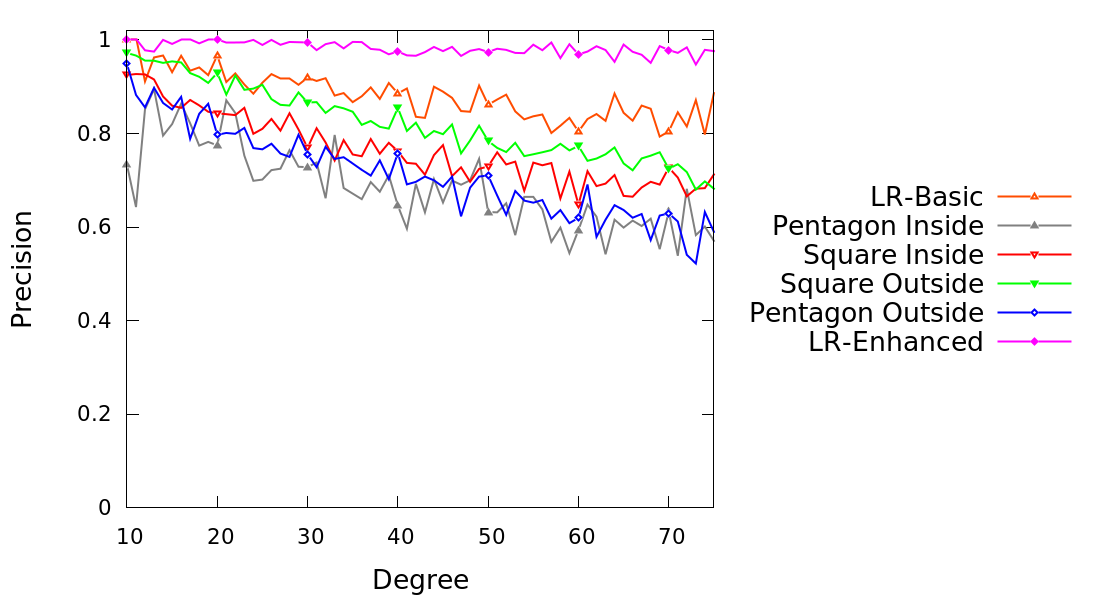}
            \captionsetup{labelformat=empty}
            \caption{\hspace{-1.1cm}(c) Email Address Book}
        \end{minipage}
        \addtocounter{figure}{-1}
        \addtocounter{figure}{-1}
        \addtocounter{figure}{-1}
  \caption{Precision at 1 for group 2 scoring functions and the linear model from group 1}
 \label{None-Ego}
\end{figure*}

\subsection{Performance on Prediction Task}
We evaluated each of the scoring algorithms presented in
Table~\ref{Algorithms} on the prediction task outlined above,
in which we seek to predict a single edge in $E_S$ incident to 
the given node $a$.
We present the results in terms of their {\em precision} --- the fraction
of instances on which the predicted edge indeed belongs to $E_S$.
Rather than providing just an overall precision number, we present our
results grouped according to the degree of the node $a$. This is to
reflect the fact that there is a gradation based on difficulty: nodes
of higher degree naturally form harder instances for some algorithms,
since there are more candidate edges to choose from (for instance, the
baseline of random guessing goes down correspondingly). On the other
hand, some algorithms are able to efficiently exploit more available
information from the larger neighborhood, and hence this presentation
highlights the dependence of our methods on degree.


The precision results are shown in Figures \ref{Ego} and
\ref{None-Ego}. Note that in these figures, we organize the algorithms
according to the Group classification in Table~\ref{Algorithms},
discussed above. As seen in figure \ref{None-Ego}, the Enhanced LR
model performs much better than the Simple LR model that combines all
the other baseline algorithms that were proposed previously in the
literature. We highlight that there is a large $17\%$, $12\%$ and
$10\%$ gain in precision in the direct message, phone book and Email
prediction task, respectively. 
And it is important to note that
we achieve these gains only by adding slightly larger network motifs
falling entirely inside or one step outside the ego-network. We emphasize the
raw precision value here as well --- a priori it would seem that a
precision of 75\% for phone book prediction using just the network
motifs would be too much to ask. The results speak for the power of
the motifs approach.  \footnotetext{This is similar to the IDF
  heuristic used in information retrieval.}
\addtocounter{footnote}{1} \footnotetext{The edges not incident to $a$
  in the cycles should be in $G_{S, train}$.}  

We now discuss further the 
performance of the algorithms in {\em Group 2}, depicted in
Figure~\ref{None-Ego}. These algorithms differ from the Group 1
benchmarks based on prior work in two aspects: (a) they look at richer
motifs in the ego network, such as squares and pentagons, and (b) they
also look beyond the ego-network for computing the scores. First, we
observe that looking at the richer motifs even in the ego-network is
fruitful as the {\em squares inside} and {\em pentagons inside}
algorithms outperform the triangle algorithm, albeit only
slightly. However, the results from looking beyond the ego-network, as
demonstrated by the {\em squares outside} and {\em pentagons outside}
algorithms, are mixed. Only in the email graph does the {\em squares
  outside} algorithm beat the others comprehensively. Intuitively,
looking beyond the ego-network results in a trade-off between sparsity
and signal strength: the signal strength of structures trails off with
increasing distance from the ego network, but the number of structures
found increases due to increasing combinatorial possibilities. From
the results, it seems that the square motifs potentially are a ``sweet
spot'' in this trade off, and we investigate this issue in the next
section in more depth. It is evident from the performance results
shown in Figure~\ref{None-Ego} that the LR models outperforms all the
other methods and the model using the Group 2 features outperforms the
model using Group 1 features from prior work. In examining the
enhanced LR model, we find that three significant features in the
model that we learned are \small $\textrm{log}(square\text{-}inside)$,
$\textrm{log}(square\text{-}outside)$ and $\textrm{log}(triangles)$.
\normalsize Since the LR model outperformed both these individual
features as well, this suggests that the model is combining
information from inside and outside the ego network. And this combined
algorithm achieves a much higher precision compared to individual
scores. Of course, the LR model has access to more features than any
individual score; but one might still not apriori expect the features
from inside and outside the ego network to complement each other
enough to achieve this level of increased precision.

We also note that results on the email graph are a bit different from
the other two. In particular, the lowest-degree neighbor algorithm is
worse than the random neighbor algorithm on the email graph, and the
square-outside algorithm performs better than the square-inside
algorithm (and all others for that matter). We hypothesize that this
might be due to the noisy nature of the email graph compared to direct
messages and phone book. Namely, presence in another user's email
address book is a much weaker signal than having a private
conversation or being in another user's phone book. Finally, we
observe that precision is often lower for high degree nodes, which is
perhaps surprising as high degree nodes have more connections/patterns
in their neighborhood that an algorithm could potentially
exploit. However, as noted at the outset,
there is also the added difficulty from having to choose from among a
larger set of neighbors, which is illustrated by the random
baseline. And the latter effect is clearly stronger, as the plots
indicate.

\subsection{Predicting Multiple Edges}

The prediction task formulated above focuses on producing a
single prediction. But one might suspect that if a scoring algorithm
performs well on this task, then it should also provide high precision
in the setting where multiple predictions are generated.  In this
section, we test this hypothesis experimentally. The task we set up
for the algorithm is as follows: we take a set of nodes from our test
set that have at least five edges in $G_S$ and given an algorithm's
scores, we report the five neighbors with the highest score to be its
prediction; thus measuring $p@5$.\footnote{Note that our ground
  truth data has only binary labels, and hence a ranking metric such
  as NDCG isn't applicable here.} The results of this experiment when
we set $G_S$ to be the phone book graph are shown in
Figure~\ref{PBPrecisonOn5}.


\begin{figure}[htp]
\centering
\includegraphics[width=\columnwidth]{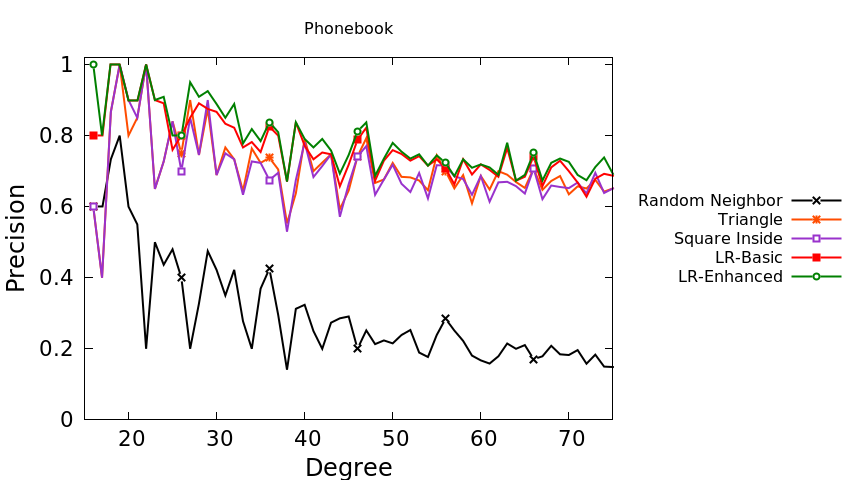}
\caption{Precision at 5 on the phone book graph for a selection of methods.
There is no node such that $d_L<16$ and $d_S\geq 10$, therefore
 the plot does not start from 10. The symbols (g1) and (g2) at the end of the legends show
 the group of the scoring function based on table \ref{Algorithms}.}
\label{PBPrecisonOn5}
\end{figure}

We briefly note that the results are quite consistent with the ones
for the prediction task: the Enhanced LR model outperforms the rest of
the algorithms and baselines. Hence, it seems reasonable to conclude
that relative performance on predicting even one edge is a good
indicator of performance on predicting multiple edges.

\section{Theoretical Modeling}
A key observation from our experimental results is that larger motifs
than triangles --- in particular, squares --- provide a substantial
performance boost on the strong tie prediction task. In this section,
we aim to provide theoretical insight into why this should be the
case. Specifically, in what settings can one expect richer motifs such
as squares to be more useful as features for strong tie prediction?

In order to study this question analytically, we define two simple
graph models that are motivated by stochastic block models. 
While the graphs produced by these models are much simpler than
what we encounter in practice, they serve as tractable structures
that capture some of the essential features of the networks 
in our applications.
Given a graph model, we can study
the usefulness of motifs by posing it as a feature sparsity
question: are squares a more discriminative feature than triangles
because of their higher prevalence? As we will see, this indeed turns out
to be true, and we will show via analytical results and simulations
that for certain parameter ranges in these graph models, squares are
more discriminative features than triangles.

To start with, we define a prediction task that reflects our
experimental evaluation (though it is not exactly equivalent). Given a
graph $G(V,E)$ with hidden labels $L = f(e) \rightarrow \{0,1\}$ for
all $e \in E$, an algorithm is required to predict the hidden labels
for edges. We will attempt to represent patterns for edges marked with
label $1$ via a graph generation model. The edges labeled as $1$ in
these models will act as strong ties and edges labeled as
$0$ will be the weak ties.

This task is quite similar to our empirical task, with a few differences.
First, recall that in the empirical task, 
we had to guess only one edge as being a strong tie. 
Here, however, we ask for an algorithm that labels all the
edges adjacent to the test node. Another difference between this model
and our prediction task is the fact that in the theoretical analysis
our algorithm does not have access to the labeled strong ties and will
treat all the ties as weak ties. We emphasize that the main purpose of
this section is to study the relative discriminative power of squares
and triangles features.



\subsection{Graph Models}
A key step in formulating a theoretical model is to have a graph model
that abstracts some of the important structural properties of 
friendship structures in real-world networks. Our approach is
to use ``planted community'' models so that there is a clear structure
that an algorithm can learn. Furthermore, in addition to the planted
structure, we also need to appropriately represent the sparsity and
noisiness of data represented in online ties of the real-world offline
networks. Hence, in all the models proposed here, the underlying
friendship structure in the graph is perturbed via sparsification and
by adding noisy edges to the graph.

\subsubsection{Single Planted Model}

We start with a basic model, which we term
the {\em single planted model}.  This is a stochastic block model graph
with $n$ nodes and $\lceil\frac{n}{c}\rceil$ communities each consisting of
$c$ nodes.  For parameters $p$, $q$, and $r$, the probability of an edge
between nodes in the same community is
$p\frac{q}{\sqrt{c}}$, and the probability of an edge between
nodes in different communities is $r$.

The graph generated by this model constitutes $G_L$, and the
subgraph consisting only of edges inside communities is $G_S$.
We can then imagine removing edges incident to a subset of
nodes $\Vtest$ in $G_S$, corresponding to the isolated nodes in our
prediction task.

We can think of the edge probabiltiy $p\frac{q}{\sqrt{c}}$ inside
communities arising as follows: we imagine $p$ as the probability
that each pair of nodes within a community knows each other, and
$\frac{q}{\sqrt{c}}$ as the probability that two such people in fact
form a link between each other on the platform.
This second filtering of the links via $\frac{q}{\sqrt{c}}$ makes
the task more realistic and more challenging.

\subsubsection{Double Planted Model}

{
This model is an extension of the single planted model. We create
two different random graphs $G_1$ and $G_2$ using the single planted
model. We call the subsets of the edges inside the community $G_{1_S}$
and $G_{2_S}$. The union of $G_{1_S}$ and $G_{2_S}$ will create our
graph $G_S$. Edges between two nodes that do not share a community are
added with probability $r$.
In this case each node is in exactly two communities.
The final graph created by this process will be $G_L$.
We call the communities in $G_L$ that
come from $G_{1_S}$ the Type-1 Group, and the communities that come
from $G_{2_S}$ the Type-2 Group.
}

\subsection{Theoretical Analysis}
Now we theoretically analyze the performance of squares and triangles
as predictors for finding edges in $G_S$ in the planted models proposed
above. Throughout this analysis, we will set the random noise edge
probability as $r=\frac{ \ln n }{n}$, and the group size to be
$c=\alpha_1 \ln n$. From this point on we define $\rho=pq$.
Due to similarity between the proofs for the Single and Double Planted Model we will
 state our claims  without proofs for the Single Planted Model, and
provide more details for the Double Planted Model.

We will show that in these models squares are on average a more discriminative
 feature than triangles. In particular, we will make this claim
by examining the gap between the expected value of the two features
for edges within a group versus edges between
groups. Given an edge $(x,y)$, denote the number of triangles and squares
 that the edge belongs to as $\Delta(x,y)$ and $\Box(x,y)$.

\subsubsection{Single Planted Model}

{
\begin{theorem}
For all edges $(x,y)$, we have $\Exp{\Delta(x,y)} <1$.
For edges that go between different communities we have \\ 
$\Exp{\Box(x,y)} <1$,
and for edges in the same community we have 
$\Exp{\Box(x,y)} > \left(1 -
  \frac{5}{n}\right)\sqrt{c}\rho^3  \simeq \sqrt{c}\rho^3.$
\end{theorem}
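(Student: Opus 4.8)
The plan is to evaluate both expectations exactly via linearity of expectation, using the mutual independence of edges in the block model, and then to split the resulting sums according to the community memberships of the auxiliary vertices. Write $p_{\mathrm{in}}=\rho/\sqrt{c}$ for the within-community edge probability and $p_{\mathrm{out}}=r=\frac{\ln n}{n}$ for the cross-community probability. Since a triangle on $(x,y)$ is determined by a third vertex $z$ adjacent to both endpoints, $\Exp{\Delta(x,y)}=\sum_{z\notin\{x,y\}}\Prb{(x,z)\in E}\,\Prb{(y,z)\in E}$; and since a $4$-cycle through $(x,y)$ is determined by an ordered pair of distinct vertices $z,w\notin\{x,y\}$ forming the path $x\!-\!z\!-\!w\!-\!y$, $\Exp{\Box(x,y)}=\sum_{z\neq w}\Prb{(x,z)\in E}\,\Prb{(z,w)\in E}\,\Prb{(w,y)\in E}$. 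Each probability equals $p_{\mathrm{in}}$ or $p_{\mathrm{out}}$ according to whether its two endpoints share a community, so every term is read off directly from the communities of $z$ and $w$.

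For the triangle bound I would treat the two endpoint cases separately. When $x,y$ lie in different communities, every choice of $z$ forces at least one cross-community edge, so each term is at most $p_{\mathrm{in}}p_{\mathrm{out}}$ and the whole sum is $O\!\left(c\,p_{\mathrm{in}}p_{\mathrm{out}}+n\,p_{\mathrm{out}}^2\right)=O\!\left(\rho\sqrt{c}\,r+n r^2\right)=o(1)$ under $c=\alpha_1\ln n$ and $r=\frac{\ln n}{n}$. When $x,y$ share a community, the only non-negligible contribution comes from the $c-2$ third vertices inside that community, giving $(c-2)p_{\mathrm{in}}^2=(1-\tfrac{2}{c})\rho^2$, while the $n-c$ outside vertices contribute $(n-c)r^2=O\!\big(\tfrac{(\ln n)^2}{n}\big)$; summing yields $\Exp{\Delta(x,y)}=\rho^2+o(1)<1$ in the relevant regime $\rho<1$. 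This covers all edges.

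For squares I would again split on the endpoint case. If $x,y$ are in different communities, I would enumerate the $O(1)$ many membership patterns of $(z,w)$: no pattern lets all three of $(x,z),(z,w),(w,y)$ be within-community simultaneously, so each surviving term carries at least one factor of $r$, and the largest contribution (two within-community edges and one cross edge, e.g.\ $z$ in $x$'s community and $w$ in $y$'s) is $O\!\left(c^2 p_{\mathrm{in}}^2 r\right)=O\!\left(c\rho^2 r\right)=O\!\big(\rho^2\tfrac{(\ln n)^2}{n}\big)=o(1)$, with the remaining patterns smaller; hence $\Exp{\Box(x,y)}<1$. If instead $x,y$ share a community $A$, the dominant term chooses both $z,w$ inside $A$, where all three edges are within-community and the contribution is $(c-2)(c-3)\,p_{\mathrm{in}}^3=(c-2)(c-3)\,\tfrac{\rho^3}{c^{3/2}}=\big(1-o(1)\big)\sqrt{c}\,\rho^3$. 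Every remaining pattern (at least one of $z,w$ outside $A$) adds only non-negative lower-order terms, so $\Exp{\Box(x,y)}\ge\big(1-o(1)\big)\sqrt{c}\,\rho^3\simeq\sqrt{c}\,\rho^3$; retaining the exact integer factors $(c-2)(c-3)$ then recovers the explicit prefactor stated in the theorem.

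The main obstacle is bookkeeping in the square computation rather than any single hard step: one must enumerate every community pattern of the pair $(z,w)$ and verify, using the specific scalings $c=\alpha_1\ln n$ and $r=\frac{\ln n}{n}$, that each cross-community pattern is genuinely $o(1)$ and that the leading within-community term is not eroded by the patterns that step outside $A$. A secondary point worth flagging is that the triangle bound $\Exp{\Delta(x,y)}<1$ relies on $\rho=pq<1$ so that $\rho^2<1$; this is the natural regime, since the interesting phenomenon is precisely that the triangle feature stays $O(1)$ and hence too sparse to discriminate, while the within-community square feature grows like $\sqrt{c}$ and cleanly separates intra- from inter-community edges. The same community-partition template should then transfer to the Double Planted Model, where each vertex lies in two communities and one additionally sums over the two independent planted structures.
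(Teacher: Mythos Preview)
Your proposal is correct and follows exactly the case-analysis-by-community-membership template the paper uses; note that the paper in fact omits the proof of this theorem entirely, proving only the Double Planted analogue (Theorem~2) in detail and remarking that the Single Planted argument is similar, so your write-up is precisely the specialization the paper leaves implicit.

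One small correction to your closing remark: with deterministic communities of size exactly $c$, the leading within-community term is $(c-2)(c-3)\rho^3/c^{3/2}=\bigl(1-\tfrac{5}{c}+O(c^{-2})\bigr)\sqrt{c}\,\rho^3$, which gives a prefactor $1-5/c$, not the $1-5/n$ stated. The paper's $1-5/n$ comes from treating community membership probabilistically (each auxiliary vertex lands in the target community with probability $c/n$), as it does in the Double Planted proof, yielding $(n-2)(n-3)(c/n)^2(\rho/\sqrt{c})^3\ge(1-5/n)\sqrt{c}\,\rho^3$. Either reading gives the asymptotic $\sqrt{c}\,\rho^3$, and your flag that the bound $\Exp{\Delta(x,y)}<1$ implicitly requires $\rho<1$ is well taken.
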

Thus, edges inside and between groups have a significant difference
in the expected number of squares they are involved in, roughly
equal to $\sqrt{c}\rho^3$. Hence, in the single planted model the squares
feature is expected to be more discriminative than triangles.

The functions $\Delta$ and $\Box$ have an integer range, and for these
features to be distinguishable they have to be non-zero, and the
probability of an edge outside a group being in a triangle or square
converges to zero.  So the power of the triangle/square feature is
equivalent to the probability of being in a triangle/square.  For an
an edge inside a group, the probability it is in a triangle is \small
$1-(1-(\frac{\rho}{\sqrt{c}})^2)^{c-2}$ \normalsize and the
probability it is in a square is \small
$1-(1-(\frac{\rho}{\sqrt{c}})^3)^{(c-2)(c-3)}$.  \normalsize The
latter is much larger for sparser (smaller $\rho$) graphs leading to
another indication that squares are a better feature than triangles.

\subsubsection{Double Planted Model}
{
\begin{theorem}
In the double planted model,
$\Exp{\Delta(x,y)} <1$ and $\Exp{\Box(x,y)} < 1$ for all edges
$(x,y)$ where $x$ and $y$ are not in the same community.
For edges $(x,y)$ where $x$ and $y$ are in the same community,
$\Exp{\Delta(x,y)} < 2$ and $\Exp{\Box(x,y)} > \sqrt{c}\rho^3$.
\end{theorem}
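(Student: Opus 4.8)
The plan is to compute both expectations by conditioning on the (fixed) community memberships of $x$ and $y$ — write $A_x,B_x$ for the Type-1 and Type-2 communities of $x$, and similarly $A_y,B_y$ — and summing over the possible remaining vertices, using that in the double planted model distinct vertex pairs receive independent edges: a pair sharing a community is joined with probability $\rho/\sqrt{c}$ and a pair sharing no community with probability $r=\frac{\ln n}{n}$ (the event that two nodes share \emph{both} a Type-1 and a Type-2 community has probability $O(c/n)$, which I would carry only as a lower-order correction). I would take the expectation over the random Type-1/Type-2 assignment of the other vertices, repeatedly using that a fixed Type-1 community and a fixed Type-2 community each contain $c$ vertices and that their intersection has expected size $\approx c^2/n$.

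For the triangle bound I would write $\Exp{\Delta(x,y)}=\sum_{z}\Prb{(x,z)\in E}\Prb{(y,z)\in E}$. When $x$ and $y$ share no community, a third vertex $z$ is joined to both with the large probability $\rho/\sqrt{c}$ only if it lies in $(A_x\cap B_y)\cup(B_x\cap A_y)$, a set of expected size $O(c^2/n)$; every other contribution carries at least one factor $r$, so all terms are $o(1)$ and $\Exp{\Delta(x,y)}<1$. When $x$ and $y$ share (say) a Type-1 community $A$, the dominant term comes from $z\in A$ and equals $(c-2)(\rho/\sqrt c)^2\approx\rho^2$, which is the single-planted triangle count and hence strictly below $1$; the only way to create a second $\Theta(\rho^2)$ term is if $x$ and $y$ also share a Type-2 community, so bounding $\Delta(x,y)$ by the sum over the at most two shared community types — each term $<1$ by the single planted model — gives $\Exp{\Delta(x,y)}<2$.

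For the square bound I would write $\Exp{\Box(x,y)}=\sum_{(u,v)}\Prb{(x,u)\in E}\Prb{(u,v)\in E}\Prb{(v,y)\in E}$, the sum ranging over ordered pairs of distinct interior vertices. For a between-community edge, an all-large-probability square needs $u$ to share a community with $x$, $v$ with $y$, and $u$ with $v$; the number of such configurations is $O(c^3/n)$ and each contributes $(\rho/\sqrt c)^3$, while configurations using a noise edge carry extra factors of $r$, so $\Exp{\Box(x,y)}=o(1)<1$. For an inside edge with shared Type-1 community $A$, the dominant term is the number of squares whose two interior vertices both lie in $A$, namely $(c-2)(c-3)$ of them, each present with probability $(\rho/\sqrt c)^3$; this already gives $(c-2)(c-3)\rho^3/c^{3/2}=(1-o(1))\sqrt c\rho^3$, and adding the remaining nonnegative terms (and, in the doubly-shared case, the analogous contribution from the second community) yields $\Exp{\Box(x,y)}>\sqrt c\rho^3$, matching the single planted model.

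I expect the main obstacle to be the bookkeeping of lower-order terms rather than any single hard inequality: I must verify that every contribution mixing the noise probability $r=\frac{\ln n}{n}$ with the community-overlap events (of size $\approx c^2/n$) is genuinely $o(1)$ once $c=\alpha_1\ln n$ is substituted, and that the combinatorial prefactor $(c-2)(c-3)/c^2=1-O(1/c)$ in the square count is absorbed correctly so that the clean bound $\sqrt c\rho^3$ survives. Granting this, the two computations together establish the content of the theorem: for inside edges $\Exp{\Box(x,y)}=\Theta(\sqrt c\rho^3)$ dominates $\Exp{\Delta(x,y)}=\Theta(\rho^2)$ by a factor $\Theta(\sqrt c)$, and since the inside-versus-between gap for squares is $\approx\sqrt c\rho^3$, squares are the more discriminative feature, with the advantage growing as the planted edges become sparser (smaller $\rho$).
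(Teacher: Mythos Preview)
Your case analysis and the overall shape of the argument match the paper's proof closely: both compute $\Exp{\Delta(x,y)}$ and $\Exp{\Box(x,y)}$ by linearity of expectation, splitting on whether $x$ and $y$ share zero, one, or two communities, and both isolate the same dominant terms ($\rho^2$ for triangles, $\sqrt{c}\rho^3$ for squares inside a shared community). The one methodological difference worth noting is how the between-community square bound is handled. You average directly over the random Type-1/Type-2 assignment of the interior vertices $u,v$, obtaining an expected count of $O(c^3/n)$ ``all-strong'' configurations and hence an $o(1)$ contribution. The paper instead first establishes, via Chernoff bounds, two structural lemmas that hold with high probability --- that every group has size $(1\pm\delta)c$ and that no Type-1 group intersects any Type-2 group in more than a constant $\alpha_2$ nodes --- and then, \emph{conditioning on a fixed group realization satisfying these}, counts the ``potential bad pairs'' $(u,v)$ deterministically as $O(c)$ before multiplying by $(\rho/\sqrt c)^3$. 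Your approach is cleaner for the expectation statement as written; the paper's buys a slightly stronger conclusion (the bound holds for every typical realization of the partition, not just on average over partitions), at the cost of the two auxiliary lemmas. Either route closes the argument, and your identification of the lower-order bookkeeping as the only real obstacle is accurate.
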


We start by stating two basic properties of the double planted model,
formalized in the following lemmas. We skip the proofs as they are
based upon a standard application of Chernoff bounds.

\begin{lemma}
  For a sufficiently large constant $\alpha_1$(\small$>\frac{4.1}{\delta^2}$\normalsize), with high
  probability(\small$>1-\frac{2}{cn}$\normalsize), each group's size is in \small $[(1 - \delta) c,(1 + \delta) c]$\normalsize.
\label{GroupSize}
\end{lemma}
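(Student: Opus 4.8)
The plan is to recognize that the size of a fixed group is determined entirely by how nodes are assigned to communities, independently of the edge-sparsification and noise steps, so it suffices to analyze the membership randomness alone. In the double planted model each node is placed independently into one of the $\lceil n/c\rceil$ groups of each type, so the probability that a given node lands in a fixed group is $1/\lceil n/c\rceil$, which equals $c/n$ up to rounding. Fixing one group $H$ and writing $X=\sum_{i=1}^{n}\mathbf{1}[i\in H]$, we see that $X$ is a sum of $n$ independent Bernoulli indicators with $\Exp{X}=\mu$, where $\mu=n/\lceil n/c\rceil=c$ up to the $\lceil\cdot\rceil$ rounding. I would then show that $X$ concentrates tightly around $c$ via a Chernoff bound and choose $\alpha_1$ large enough that the resulting tail drops below $\frac{2}{cn}$.

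First I would invoke the standard two-sided multiplicative Chernoff bound for a sum of independent indicators: for $0<\delta<1$ we have $\Prb{|X-\mu|>\delta\mu}\le e^{-\delta^2\mu/3}+e^{-\delta^2\mu/2}\le 2e^{-\delta^2\mu/3}$. Since $\mu=c$ up to rounding, this yields $\Prb{X\notin[(1-\delta)c,(1+\delta)c]}\le 2e^{-\delta^2 c/3}$, where the rounding only shifts $\mu$ by a lower-order amount that can be absorbed by shrinking the deviation parameter slightly, so I would carry it as an $O(1/n)$ correction.

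Next I substitute the parameter setting $c=\alpha_1\ln n$, which turns the bound into $2e^{-\delta^2\alpha_1\ln n/3}=2\,n^{-\delta^2\alpha_1/3}$. To reach the claimed failure probability $\frac{2}{cn}=\frac{2}{\alpha_1 n\ln n}$ it suffices that $n^{-\delta^2\alpha_1/3}\le\frac{1}{\alpha_1 n\ln n}$, i.e. that $\frac{\delta^2\alpha_1}{3}\ln n\ge\ln(\alpha_1 n\ln n)=\ln n+\ln(\alpha_1\ln n)$. Dividing by $\ln n$, the requirement becomes $\frac{\delta^2\alpha_1}{3}\ge 1+\frac{\ln(\alpha_1\ln n)}{\ln n}$, whose right-hand side tends to $1$ as $n\to\infty$. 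Thus any $\alpha_1$ with $\frac{\delta^2\alpha_1}{3}>1$, i.e. $\alpha_1>\frac{3}{\delta^2}$, works asymptotically; the stated threshold $\alpha_1>\frac{4.1}{\delta^2}$ gives $\frac{\delta^2\alpha_1}{3}>\frac{4.1}{3}\approx 1.37$, leaving a constant margin that dominates the lower-order term $\frac{\ln(\alpha_1\ln n)}{\ln n}$ for all sufficiently large $n$, as well as the $O(1/n)$ rounding correction.

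The calculation is routine, and the only genuine care points are (i) correctly identifying group size as a binomial-type sum with mean $c$ — i.e. pinning down the membership model so the indicators are independent — and (ii) the constant bookkeeping showing that $4.1$ (rather than the bare asymptotic threshold $3$) is what forces the bound below $\frac{2}{cn}$ at finite $n$. I expect (ii) to be the main, though mild, obstacle: one must check that the slack $\frac{4.1-3}{3}\ln n$ actually exceeds $\ln(\alpha_1\ln n)$ over the relevant range of $n$, not merely in the limit. Finally, since the downstream analysis of $\Delta(x,y)$ and $\Box(x,y)$ for a fixed edge refers only to the $O(1)$ groups containing $x$ and $y$, this per-group bound is exactly what is needed, and a union bound over those constantly many groups preserves the high-probability guarantee.
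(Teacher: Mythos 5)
The paper gives no proof of this lemma --- it explicitly skips it as ``a standard application of Chernoff bounds'' --- and your argument is exactly that standard application: you correctly pin down the group size as a sum of $n$ independent indicators with mean $c$ (up to rounding), apply the two-sided multiplicative Chernoff bound, and do the constant bookkeeping for $c=\alpha_1\ln n$, so it matches the intended proof. One caveat worth noting: if ``each group'' is read as \emph{all} $2\lceil n/c\rceil$ groups simultaneously (the reading consistent with the companion Lemma~\ref{IntersectionSize}, which is global over all pairs of groups, and with the fact that the downstream expectation lemmas quantify over all edges, not a fixed one), then a union bound over all $\Theta(n/c)$ groups is required rather than over the $O(1)$ groups containing a fixed edge; with your looser exponent $e^{-\delta^2 c/3}$ this would need $\alpha_1>6/\delta^2$, whereas the stated threshold $4.1/\delta^2$ (just above $4/\delta^2$) is exactly what the sharper two-sided form $2e^{-\delta^2 c/2}$ accommodates, so under the all-groups reading you should switch to that form (or accept a larger constant) to recover the claimed $1-\frac{2}{cn}$ guarantee.
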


\begin{lemma}
  For a sufficiently large constant $\alpha_1$(\small$>\frac{4.1}{\delta^2}$\normalsize), with high
  probability, no Type-1 group has an intersection of size more than
  $\alpha_2$(\small$\leq 3$\normalsize) with any Type-2 group.
\label{IntersectionSize}
\end{lemma}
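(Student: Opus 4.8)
The plan is to treat the overlap $|A\cap B|$ between a fixed Type-1 group $A$ and a fixed Type-2 group $B$ as a sum of node indicators, bound its upper tail by a binomial/Chernoff estimate, and then take a union bound over all pairs of groups. First I would condition on the high-probability event of Lemma~\ref{GroupSize}, so that every group has size at most $(1+\delta)c$; this is where the hypothesis $\alpha_1>4.1/\delta^2$ is used, inherited directly from that lemma. Since the Type-1 and Type-2 community assignments are drawn independently, conditioning on the Type-1 group $A$ does not constrain the Type-2 labels of the nodes of $A$, which (in the independent-label version of the model) remain i.i.d. uniform over the $\lceil n/c\rceil$ Type-2 groups; hence each node of $A$ lands in the fixed group $B$ with probability $\approx c/n$.

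With mean overlap $\Exp{|A\cap B|}\approx |A|\cdot (c/n)\approx c^2/n$, I would then estimate the upper tail. Using $|A|\le (1+\delta)c$,
\[
\Prb{|A\cap B|\ge k}\;\le\;\binom{|A|}{k}\left(\frac{c}{n}\right)^{k}\;\le\;\frac{1}{k!}\left(\frac{(1+\delta)c^{2}}{n}\right)^{k}.
\]
The number of (Type-1, Type-2) pairs is at most $\lceil n/c\rceil^{2}\le (n/c)^{2}$, so a union bound yields
\[
\Prb{\exists\,A,B:\,|A\cap B|\ge k}\;\le\;\Big(\tfrac{n}{c}\Big)^{2}\cdot\frac{1}{k!}\left(\frac{(1+\delta)c^{2}}{n}\right)^{k}\;=\;O\!\left(\frac{c^{\,2k-2}}{n^{\,k-2}}\right).
\]
Substituting $c=\alpha_1\ln n$ and taking $k=4$ (one above the claimed bound $\alpha_2\le 3$) gives $O\big(c^{6}/n^{2}\big)=O\big((\ln n)^6/n^2\big)=o(1)$, so with high probability no overlap reaches $4$, i.e.\ every overlap is at most $3$. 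In fact $k=3$ already produces $O(c^{4}/n)=o(1)$, so one could even take $\alpha_2=2$; the looser bound $\alpha_2\le 3$ in the statement simply leaves slack. (Note that $k=2$ gives $\Theta(c^{2})\to\infty$, which correctly reflects that overlaps of size $2$ do occur, so the threshold is genuine.)

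The main obstacle is purely technical: justifying the independence step cleanly. If the model assigns each node a Type-1 and a Type-2 label independently and uniformly, the overlaps are honest binomials and the tail bound above is immediate. If instead each type's groups are formed as an exact uniform partition, the indicators for distinct nodes of $A$ falling into $B$ become negatively associated (a hypergeometric rather than binomial count), but negative association only tightens the upper-tail Chernoff bound, so the same estimate survives. Making this reduction rigorous, and carrying the $(1+\delta)$ size slack from Lemma~\ref{GroupSize} through the binomial coefficient, is the only delicate part; everything else is the routine first-moment and union-bound computation that the text refers to as a standard application of Chernoff bounds.
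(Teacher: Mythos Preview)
Your proposal is correct and aligns with the paper's approach: the paper in fact omits the proof entirely, stating only that it is ``based upon a standard application of Chernoff bounds,'' and your argument (binomial/hypergeometric tail bound on the overlap plus a union bound over the $(n/c)^2$ pairs, after conditioning on Lemma~\ref{GroupSize}) is precisely such a standard application. Your observation that $k=3$ already suffices, so one could take $\alpha_2=2$, is a valid sharpening that the paper does not mention.
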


The above two properties provide us with the tools to analyze the
expected values of the triangles and squares features.

\begin{lemma}
  For any nodes $x$ and $y$, we have $\Exp{\Delta(x,y)} \le 1$, except
  the case when $x$ and $y$ share the same group in both types, in which case
  $\Exp{\Delta(x,y)} \le 2$.  However, this latter event is rare.
\end{lemma}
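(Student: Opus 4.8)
The plan is to compute $\Exp{\Delta(x,y)}$ directly by linearity of expectation, writing
\[
\Exp{\Delta(x,y)} = \sum_{z \neq x,y} \Prb{(x,z) \in E_L \text{ and } (y,z) \in E_L},
\]
and then classifying each candidate third node $z$ according to how its two community labels (one Type-1, one Type-2) relate to those of $x$ and $y$. Since distinct potential edges in the double planted model are generated independently, the summand factors as $\Prb{(x,z)\in E_L}\cdot\Prb{(y,z)\in E_L}$, and each factor is $\approx \frac{\rho}{\sqrt{c}}$ when $z$ shares a community with the corresponding endpoint and $r=\frac{\ln n}{n}$ otherwise.

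First I would isolate the dominant contribution. A node $z$ makes \emph{both} incident edges ``inside'' edges --- and hence contributes at the scale $\frac{\rho^2}{c}$ --- only when it lies in a community that $x$ and $y$ have in common. If $x$ and $y$ share exactly one group (say a Type-1 group $A$), then the nodes $z\in A$ number $\approx c$ by Lemma~\ref{GroupSize}, each contributing $\approx\left(\frac{\rho}{\sqrt{c}}\right)^2=\frac{\rho^2}{c}$, for a total of $\approx\rho^2\le 1$. If $x$ and $y$ share both a Type-1 group $A$ and a Type-2 group $B$, then $A$ and $B$ are essentially disjoint (they meet in at most $\alpha_2$ nodes by Lemma~\ref{IntersectionSize}), so each set contributes $\approx\rho^2$ through its own edge type, giving $\approx 2\rho^2\le 2$. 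This is exactly the stated dichotomy.

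Next I would verify that every remaining category of $z$ contributes only $o(1)$. Nodes sharing a community with at most one of $x,y$ force at least one incident edge to be a cross-community edge of probability $r$; summing $\frac{\rho}{\sqrt{c}}\cdot r$ over the $O(c)$ single-sharing nodes and $r^2$ over the $\approx n$ non-sharing nodes yields $O\!\left(\frac{(\ln n)^{3/2}}{n}\right)+O\!\left(\frac{(\ln n)^2}{n}\right)=o(1)$. The only other ``inside--inside'' possibility is a mixed coincidence in which $z$ shares $x$'s group in one type and $y$'s group in the other; by Lemma~\ref{IntersectionSize} each Type-1 group meets each Type-2 group in at most $\alpha_2\le 3$ nodes, so there are $O(1)$ such $z$, contributing $O\!\left(\frac{1}{c}\right)=o(1)$. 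The same lemma also gives the final sentence of the statement: for $x$ and $y$ to share both groups they must both lie in some intersection $A\cap B$ of size at most $\alpha_2$, so such pairs are indeed rare.

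The main obstacle is bookkeeping rather than any single delicate estimate: one must enumerate the $z$-categories exhaustively and avoid double-counting the handful of nodes that share communities with both endpoints through different types, while confirming that the fluctuation slack $(1+\delta)$ from Lemma~\ref{GroupSize} does not push the dominant term past its threshold. Concretely, one needs the constant regime $\rho<1$ so that $(1+\delta)\rho^2\le 1$ for small $\delta$ and large $n$, which is precisely what makes the $\le 1$ and $\le 2$ bounds hold once the $o(1)$ corrections from the mixed and cross-community terms are absorbed.
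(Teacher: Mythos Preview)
Your proposal is correct and follows essentially the same approach as the paper: both write $\Exp{\Delta(x,y)}=\sum_{z}\Prb{(x,z),(y,z)\in E_L}$ and then case-split on how $z$'s community labels relate to those of $x$ and $y$, isolating the dominant $\rho^2$ (or $2\rho^2$) contribution from $z$ in a shared group and showing every other category is $o(1)$. Your write-up is in fact a bit more careful than the paper's, explicitly separating the ``mixed coincidence'' case via Lemma~\ref{IntersectionSize} and noting the need for $\rho<1$ to absorb the $(1+\delta)$ slack from Lemma~\ref{GroupSize}.
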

\begin{proof}
These are the three cases we need to analyze:
\\
\textbf{x and y have two common groups:} The probability of
  $x$ and $y$ sharing both groups is $(\frac{c}{n})^2$, and
  conditioned on that, we can upper-bound $\Exp{\Delta(x,y)}$ by:
$2\frac{c}{n}(\frac{\rho}{\sqrt c})^2 + (\frac{n-c}{n})^2(\frac{ \ln n}{n})^2 < 2$.

\textbf{x and y are in the same group:} The third node $z$,
  can be in the same group as $x$ and $y$, can share a group with only
  one of them or be an outsider to both nodes. The upper bound for the
  probability of any of these events happening is $\frac{2c}{n}$,
  $\frac{c}{n}$ and $\frac{n-c}{n}$. So the probability of triangle
  $(x,y,z)$ existing is:
  \begin{center}
$\frac{c}{n}(\frac{\rho}{\sqrt c})^2  + 2\frac{c}{n}\frac{n-c}{n}(\frac{\rho}{\sqrt c})(\frac{ \ln n }{n}) + \frac{(n-c)}{n}\frac{n-2c}{n}(\frac{ \ln n }{n})^2$
\end{center}
By multiplying $(n-2)$ by the value above, we will find an upper-bound for $\Exp{\Delta(x,y)}$ which is smaller than $1$.\\
\textbf{x and y are in different groups:} Each third node
  $z$, can be in the same Type 1 group with $x$ and the same Type 2
  group with $y$ or vice-versa, it can also share a group with either
  one of $x$ or $y$ but not both, or it can be an outsider to both
  nodes. An upper-bound for the probability of having triangle
  $(x,y,z)$ is:
\begin{center}
$2(\frac{c}{n})^2(\frac{\rho}{\sqrt{c}})^2+2\frac{2c}{n} (\frac{n-c}{n})(\frac{\rho}{\sqrt c})(\frac{ \ln n }{n}) + (\frac{(n-2c)}{n})^2(\frac{ \ln n }{n})^2$
\end{center}
Now if we multiply this by $n-2$ it will be the expected number of triangles which is less than $1$.
\hfill $\qed$
\end{proof}

In summary, for the common cases, $E[\Delta(x,y)] < 1$, and even for
the rare case in which $x$ and $y$ have two groups in common, 
$E[\Delta(x,y)] < 2$. Thus, one might not
expect triangles to be a useful feature in discriminating between the
two kinds of ties. 

Now, we present an analysis for the use of squares as features.
\begin{lemma}
  If $x$ and $y$ are in a different group, $\Exp{\Box(x,y)} < 1$ and if they share a group, $\Exp{\Box(x,y)} \geq \sqrt{c}\rho^3$.
\end{lemma}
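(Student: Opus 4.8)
The plan is to mirror the case analysis used for triangles, but with one extra intermediate node. Every square containing the given edge $(x,y)$ has the form $x - y - z - w - x$, so it is specified by an ordered pair of distinct nodes $(z,w)$ with $z,w \notin \{x,y\}$ together with the three edges $(y,z)$, $(z,w)$, $(w,x)$. These three edges involve distinct node pairs (and are distinct from $(x,y)$), so conditioned on the community assignment they are independent, and $\Exp{\Box(x,y)} = \sum_{(z,w)} \Prb{(y,z)}\,\Prb{(z,w)}\,\Prb{(w,x)}$, where each factor equals $\frac{\rho}{\sqrt c}$ when its endpoints share a community and $r=\frac{\ln n}{n}$ otherwise. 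I would organize the sum by the community-membership pattern of $(z,w)$ relative to $x$ and $y$, weighting each pattern by its probability (the $\frac{c}{n}$-type factors, or equivalently conditioning on the group sizes) and multiplying by the three conditional edge probabilities, exactly as in the triangle lemma.

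For the lower bound, where $x$ and $y$ share a community $H$, I would retain only the dominant pattern in which both $z$ and $w$ also lie in $H$. Then all three edges are within-community, each of probability $\frac{\rho}{\sqrt c}$, and the number of ordered pairs is $(|H|-2)(|H|-3)$. Invoking Lemma~\ref{GroupSize} to take $|H|\ge(1-\delta)c$ (with $\delta$ chosen small since $\alpha_1$ is free), this single pattern already contributes $\Theta\big(c^2 (\tfrac{\rho}{\sqrt c})^3\big)=\Theta(\sqrt c\,\rho^3)$, and since all remaining patterns only add nonnegative terms we obtain $\Exp{\Box(x,y)}\ge\sqrt c\,\rho^3$ to leading order, carrying the same $(1-o(1))$ correction as the single planted statement.

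For the upper bound, where $x$ and $y$ share no community (so $(x,y)$ is a noise edge), I would index the patterns by how many of the three edges $(y,z),(z,w),(w,x)$ are noise edges and show the total is $o(1)$, hence $<1$. The three-noise and two-noise patterns are crude: there are at most $\sim n^2$ pairs, and the accumulated factors of $r=\frac{\ln n}{n}$ easily beat $n^2$. The one-noise pattern, with $z$ a community-mate of $y$, $w$ a community-mate of $x$, and $(z,w)$ a noise edge, admits at most $\sim c^2$ pairs and contributes $O\big(c^2 (\tfrac{\rho}{\sqrt c})^2 r\big)=O\big(c\rho^2\tfrac{\ln n}{n}\big)=O\big(\rho^2\tfrac{(\ln n)^2}{n}\big)=o(1)$ once $c=\alpha_1\ln n$.

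The hard part is the zero-noise pattern, where all three of $(y,z),(z,w),(w,x)$ are within-community even though $x$ and $y$ share none. This forces a chain of communities $y\in H_1$, $z\in H_1\cap H_2$, $w\in H_2\cap H_3$, $x\in H_3$, so the only way to accumulate $\Theta(c^2)$ admissible pairs $(z,w)$ — which would be fatal, since $c^2(\tfrac{\rho}{\sqrt c})^3=\sqrt c\,\rho^3\to\infty$ — is for a community of $y$ to overlap heavily with a community of $x$. This is precisely what Lemma~\ref{IntersectionSize} forbids: any Type-1/Type-2 intersection has size at most $\alpha_2\le 3$, while same-type communities are disjoint. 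Using this (together with the fact that each node lies in exactly one community of each type) to bound the number of admissible $(z,w)$ by $O(c)$, the pattern contributes $O\big(c(\tfrac{\rho}{\sqrt c})^3\big)=O(\rho^3/\sqrt c)=o(1)$. I expect the main obstacle to be the bookkeeping here: carefully enumerating which pairs of communities can simultaneously be chained to both $x$ and $y$, and applying the intersection bound so that no configuration secretly admits $\Theta(c)$ choices of both $z$ and $w$ at once.
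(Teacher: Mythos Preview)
Your proposal is correct and follows essentially the same route as the paper: the lower bound keeps only squares lying entirely inside the shared community, and the upper bound splits according to how many of the three non-$(x,y)$ edges are cross-community, with the all-within-community case (your ``zero-noise'' pattern, the paper's ``potential bad pairs'') controlled via Lemma~\ref{IntersectionSize} to show there are only $O(c)$ admissible pairs and hence a contribution of $O(\rho^3/\sqrt{c})$. The paper organizes that last piece of bookkeeping as the sum $\sum_{j} |S_{j,x}\cap S_{3-j,y}| + \sum_{j,i} |S_{j,x}\cap S_{3-j,i}|\,|S_{j,y}\cap S_{3-j,i}|$, which is exactly what your chain-of-communities enumeration produces once you fix the types of $H_1$ and $H_3$.
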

\begin{proof}
  We will analyze this via a case analysis, as before:
  \\
\textbf{x and y are in the same group:}
  The number
  of squares including $(x,y)$ in this case is lower-bounded by the
  expected number of squares when all four nodes are inside the same
  group. The probability of the other two nodes being in the same
  group is $(\frac{c}{n})^2$ so the probability of a square with the
  specific ordering of $x$, $u$, $v$, $y$ will be $ (\frac{c}{n})^2(\frac{\rho}{\sqrt c})^3 = \frac{\sqrt{c}\rho^3}{n^2}$.\\
   There are \small $(n-2)(n-3)$ \normalsize candidate pairs leading to the claimed lower-bound.
    If these two nodes
    share two groups then with a similar method and using Lemma
    \ref{IntersectionSize} we can get a lower-bound of
    $2\sqrt{c} \rho^3$.
\\
\textbf{x and y are in different groups:}
    When $x$ and $y$ are in different groups, there are two ways a square
can form that includes $(x,y)$.  The first way is for the square to include
an edge whose ends do not share a group.  The second is for each of the four
edges to have ends that share a group.  Note that this second situation
is not possible in the single planted model, but becomes possible in
the double planted model whenever there are two other nodes $u$ and $v$ where
$(x,u)$,$(u,v)$ and $(v, y)$ each have a group in common.
We will call such a pair $(u,v)$ a {\em potential bad pair} for $(x,y)$.

The case in which the square involves an edge whose ends do not share
a group is the easier case; the expected number of such squares is very small
due to the low probability on such cross-group edges, and we omit the
details here.
We now consider the case of a square whose edges lie within groups;
we bound the expected number of such squares by bounding the
number of potential bad pairs for two nodes.



    We know that each of $x$ and $y$ have at most $2c(1+\delta)$
    shared group members.
    A neighbor of $x$ and a neighbor of $y$ should be in the same group
    to create a bad pair. We name the groups that exclude $x$ and $y$,
    $S_{j,1},\ldots,S_{j,\frac{n}{c}-2}$ where $j \in \{1,2\}$ denotes
    which group type it is, and the groups including $x$ and
    $y$ as $S_{1,x},S_{2,x},S_{1,y},S_{2,y}$. We know that the number
    of potential bad pairs is:
\begin{center}
    $ \sum_{j=1}^2 [|S_{j,x} \cap S_{3-j,y}| + \sum_{i=1}^{\frac{n}{c}-2} |S_{j,x} \cap S_{(3-j),i}||S_{j,y} \cap S_{(3-j),i}|]$
\end{center}
    So by Lemma \ref{IntersectionSize} we know, the terms
    $|S_{1,x} \cap S_{2,y}|$ and $|S_{2,x} \cap S_{1,y}|$ are upper
    bounded by a small constant. We also know
    $|S_{j,y} \cap S_{3-j,i}|$ and $|S_{j,x} \cap S_{3-j,i}|$ for
    $1 \leq j \leq 2$, are non-zero for at most $(1+\delta)c$ terms
    and when they are non-zero, By Lemma \ref{IntersectionSize} we
    know that they are bounded by a small constant. Therefore, the
    inner sum will be at most $\mathcal{O}(c)$. Therefore, the
    expected number of potential bad pairs is $\mathcal{O}(c)$ which
    means it is less than $\gamma c$ (where $\gamma << 2\alpha_2^2$)
    with high probability.\footnote{By applying the Chernoff bounds to
      two groups of Type-1 having intersection in a Type-2 group we
      can easily get a bound of $\frac{\alpha_2^2}{2}$} Now that the
    grouping is done we add the edge probabilities. Since each edge is
    inside a group it will exist with probability $\frac{\rho}{\sqrt{c}}$
    which means the expected number of bad squares is $\gamma c(\frac{\rho}{\sqrt{c}})^3 < \frac{2\alpha_2^2 \rho^3}{\sqrt{c}}$.
   This number for a large enough $n$ is smaller than $1$.
   \hfill $\qed$
\end{proof}

The separation between $\Exp{\Box(x,y)}$ and
$\Exp{\Delta(x,y)}$ is as in the single planted model,
and on average, we would expect the squares
feature to be more discriminative compared to the triangles
feature. Hence, even with these simple planted models, we observe a
phenomenon where the sparsity of the triangles feature might limit its
usefulness as a feature compared to squares.

\subsection{Simulation Results}

\begin{figure}[t!]
\centering
\begin{minipage}[l]{\columnwidth}
	\centering
	\includegraphics[width=0.48\linewidth]{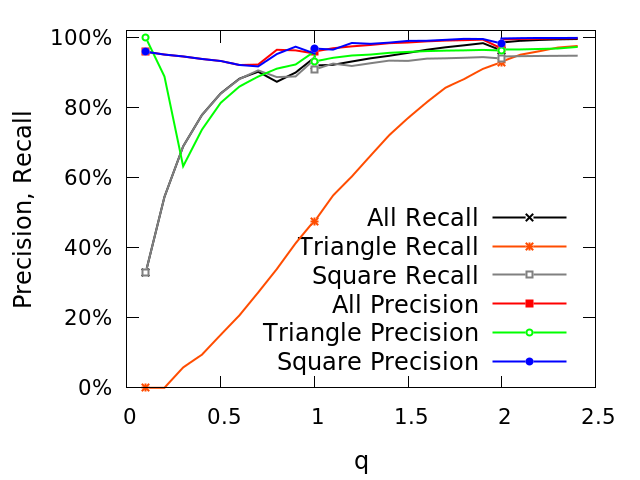}
	\includegraphics[width=0.48\linewidth]{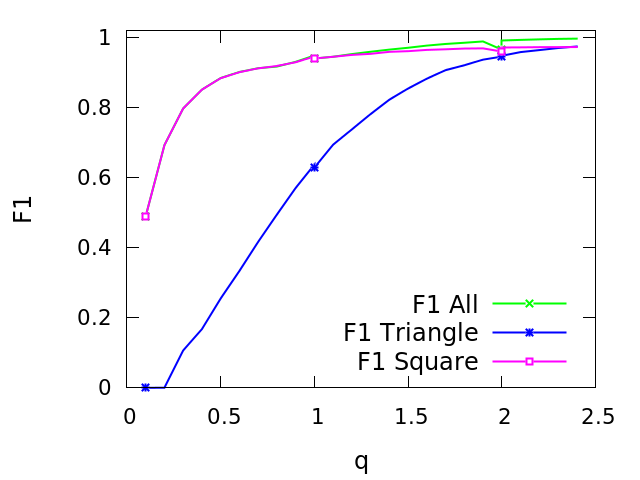}
    \captionsetup{labelformat=empty}
    \caption{(a) Single Planted Model}
\end{minipage}
\begin{minipage}[l]{\columnwidth}
	\centering
	\includegraphics[width=0.48\linewidth]{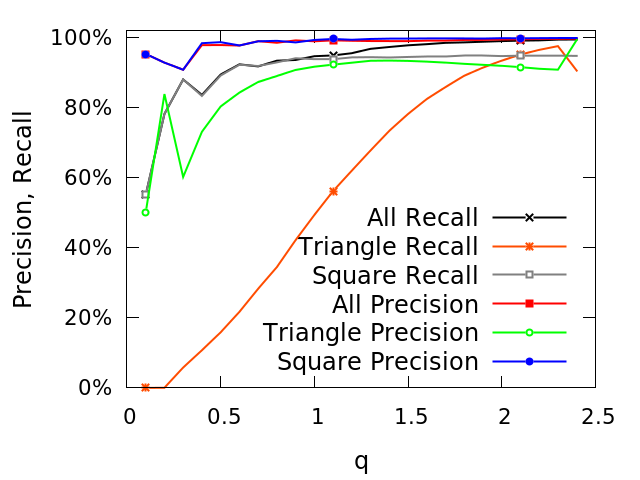}
	\includegraphics[width=0.48\linewidth]{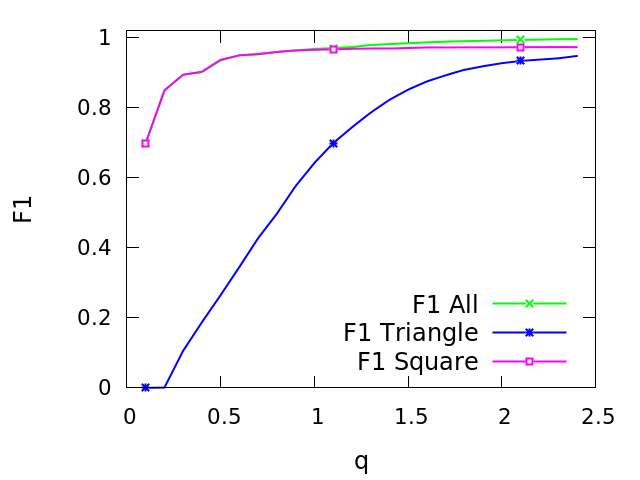}
    \captionsetup{labelformat=empty}
    \caption{(b) Double Planted Model}
\end{minipage}

\caption{Precision, recall and F1 score when $p=0.85$ and variable $q$. (a) Single planted model (b) Double planted model}
\label{Simulation}
\end{figure}

The theoretical results presented above show a separation in the
behavior of the expected number of triangles and squares containing
different kinds of edges.  We now perform computational simulations
to see how these separations in expectation translate into
probabilistic outcomes with concrete values for the parameters.

For all these simulations, we use the prediction framework from
the previous section.
In particular, we use the number of triangles, number of squares, and
their logarithms, both separately and combined together, as
features. Then, as before, we train an LR model that is then used to
make the predictions on the test set.

Given our theoretical results, we expect a parameter range where the
dominant feature changes from triangles to squares. Hence, we run an
experiment for each parameter to understand how this switch
happens. In each of our experiments, we vary the chosen parameter over
a wide range while keeping all the other parameters of the model
fixed. We observe that for most of our parameters, the relative
usefulness of the triangles and squares remains unchanged throughout
the range. The only parameters that have an effect on the features'
relative usefulness are the ones that affect the sparsity of the
resulting graph.

The primary parameter of interest in the planted models is $q$, since
the graph gets sparser as $q$ gets smaller. Hence, in our presentation
below we fix the rest of the parameters and only study the effect of $q$
in the two planted models. We use the specified generative process for
each model to build hundreds of train and test graphs. As suggested by
Figure~\ref{Simulation} the squares features tend to
outperform the triangles when the graph is sparse. In these
simulations we choose $n=4000$, $r=\frac{\ln(n)}{n}$, the expected size for the groups $c$
to be $30$, $p=0.85$ and values in $[0.1, 2.5]$ with a step of $0.1$
for $q$.

\section{Conclusion}

In this work, we use information about a dense graph $G_L$
composed primarily of weak ties to fill in the strong ties in
a sparse graph $G_S$, using the frequency of small subgraphs as features.
We achieve high precision on this prediction task;
however, we need to go beyond structures based on triangles ---
as in standard approaches for strong-tie prediction ---
and look at subgraphs on larger numbers of nodes.
The methodology seems general enough to apply to many
other settings as well.

The relative usefulness of the network structures also presents a
clear trade-off between noisiness and sparsity: from our results it is
evident that small structures that are in direct proximity to the
target node have a higher signal strength than structures that
involve nodes that are farther off. At the same time, these
structures tend to be sparse, and hence in some settings more abundant
structures such as squares are more useful for filling in
a sparsely populated graph.
Our theoretical models and results demonstrate 
contexts where this can be shown to be the case analytically.

The setting discussed here also opens up
several new questions. In particular, is it possible to quantify the
properties of the relationship between $G_L$ and $G_S$ that are
necessary for training on $G_L$ to be useful for filling in $G_S$? We
also note that we experimented using multiple graphs to predict
edges in $G_S$ and we only observed a small ($\sim 1\%$) increase
in precision. 
This leads to the possibility that an understanding of the
overlap properties among graphs might indicate when 
the use of multiple graphs could be effective for these
types of prediction tasks.

\bibliographystyle{abbrv}
\bibliography{refs}  
\vfill\eject
\section{Appendix}
\subsection{Computing Squares on Large Graphs} \label{sec:hyperloglog}
Computing the exact number of squares on large graphs can be a
challenging problem. This connects to a large literature on counting
triangles and other network motifs, but the size of the graph in our
problem poses challenges for even the most efficient known
techniques. Here, we briefly point out a scalable heuristic that can
compute the feature using a randomized algorithm by utilizing
HyperLogLog sketches~\cite{flajolet_2008}. This approach does not
provide good worst case guarantees as it is, but in practice we've
observed reasonable performance from the heuristic. We also note that
the counts produced by these sketches are fed to a classifier, which
provides an additional layer of robustness for error.

Consider a node $a$ with direct neighbors $B$ in $G_L$, and second
degree neighbors $C$ (excluding the nodes distance two away that are
in $B$). We need to compute the squares feature for each $b\in B$. A
naive method of computing this would involve forming a list of $b$'s
neighbors that are also in $C$, and doing pairwise intersections
between the lists of all $b$'s. We note that one can flip this
computation around, with each $c\in C$ having a list of neighbors that
is intersected with the set $B$, as then each pair in the resulting
list has a unique square path.\\
This idea can be implemented efficiently with HyperLogLog sketches
that implement intersections~\cite{flajolet_2008}. In particular, each
$c$ computes a sketch of its neighbors, and each $b$ is annotated
with a sketch of $B$'s. Then, for an existing $(b,c)$ edge, the
intersection of these sketches is exactly one more than the number of
squares that $b$ participates in. This computation can be implemented
efficiently in MapReduce~\cite{dean2008mapreduce}.

We re-emphasize that in our experiments in this work, we do not use this
approximation (since the sample size was tractable with exact
computation). We note this algorithm here to demonstrate that the
squares feature can be computed on extremely large graphs in an
efficient manner.
\end{document}